\documentclass[journal]{IEEEtran}
\usepackage{color,xcolor,colortbl}
\usepackage{stfloats}
\usepackage{array}
\usepackage{arydshln}
\usepackage{booktabs}
\usepackage{threeparttable}
\usepackage{multirow}

\def\black{\color{black}}
\usepackage{graphicx}
\usepackage{hyperref}
\usepackage{tabularx}
\hypersetup{hidelinks,
	colorlinks=true,
	allcolors=black,
	pdfstartview=Fit,
	breaklinks=true
}
\usepackage{amsmath,bm}
\usepackage{amssymb}
\usepackage{amsthm}
\usepackage[ruled,linesnumbered]{algorithm2e}
\usepackage[normalem]{ulem}

\newtheoremstyle{IEEEtheorem}
  {}
  {}
  {}
  {10pt}
  {\itshape}
  {:}
  { }
  {\thmname{#1}\thmnumber{ #2}\thmnote{ (#3)}}

\theoremstyle{IEEEtheorem}

\newtheorem{thm}{Theorem}
\newtheorem{ass}{Assumption}
\newtheorem{cor}{Corollary}
\newtheorem{lem}{Lemma}
\newtheorem{rmk}{Remark}
\newtheorem{dfn}{Definition}
\newtheorem{prop}{Proposition}

\newtheorem{prob}{Problem}

\usepackage[capitalize, nameinlink]{cleveref}
\crefname{thm}{Theorem}{Theorems}
\crefname{prob}{Problem}{Problems}
\crefname{ass}{Assumption}{Assumptions}
\crefname{prop}{Proposition}{Propositions}
\crefname{lem}{Lemma}{Lemmas}
\crefname{rmk}{Remark}{Remarks}
\crefname{cor}{Corollary}{Corollaries}
\crefname{dfn}{Definition}{Definitions}

\usepackage{float}
\usepackage{bm}
\usepackage{comment}
\usepackage{upgreek}
\usepackage[caption=false,font=footnotesize,labelfont=sf,textfont=sf]{subfig}

\usepackage{pdfpages}

\ifCLASSINFOpdf
\else
\fi
\ifCLASSOPTIONcompsoc
 \usepackage[caption=false,font=normalsize,labelfont=sf,textfont=sf]{subfig}
\else
 \usepackage[caption=false,font=footnotesize]{subfig}
\fi
\begin{document}

\includepdf[fitpaper=true, pages=-]{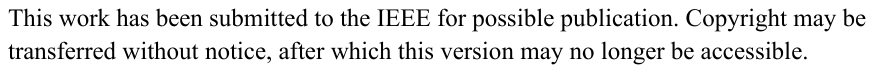}

\title{Decentralized Control Synthesis in IBR-Dominated Power Systems: A Block Diagonal Dominance Based Approach}


\author{
    Pudong~Ge,~\IEEEmembership{Member,~IEEE},
    Muhammad~Sharjeel~Javaid,~\IEEEmembership{Member,~IEEE},
    Xiaoyu~Tan,~\IEEEmembership{Graduate Student Member,~IEEE},
    Jianli~Gao, David~Angeli,~\IEEEmembership{Fellow,~IEEE},
    Janusz~Bialek,~\IEEEmembership{Fellow,~IEEE},
    and
    Balarko~Chaudhuri,~\IEEEmembership{Fellow,~IEEE}

    \thanks{The authors are with the Department of Electrical and Electronic Engineering, Imperial College London, London, SW7 2AZ, U.K. (e-mails:\{pudong.ge19, m.javaid19, xiaoyu.tan22, jianli.gao18, d.angeli, j.bialek, b.chaudhuri\}@imperial.ac.uk).}
}

\maketitle

\begin{abstract}
Integrating inverter-based resources (IBRs) from multiple vendors into power systems is challenging because their controllers are typically designed independently, with little coordination. To address this problem, this paper proposes a decentralized control synthesis framework for IBRs based on block-diagonal dominance (BDD) theory, which ensures the small-signal stability of multi-IBR power systems. By leveraging the grid frequency response, this approach enables a decentralized multi-input multi-output (MIMO) control design. Furthermore, the BDD-constrained design integrates a guaranteed minimum decay rate and defines a novel numerical metric to quantify the conservatism inherent in the decentralized stability certificate. The proposed control design with BDD constraints and minimum decay rate implementation is validated through a case study of the IEEE 9-bus test system.
\end{abstract}

\begin{IEEEkeywords}
Block diagonal dominance, control synthesis, decentralized stability, inverter-based resource.
\end{IEEEkeywords}

%
\IEEEpeerreviewmaketitle

\section{Introduction}
%
%
%
%
\IEEEPARstart{I}{nverter}-based resources (IBRs) in power systems have changed the underlying stability characteristics, shifting the system from a predominantly synchronous paradigm to a non-synchronous one \cite{GPST_RAG,ProcIEEE_YG}. Unexpected dynamical interactions of IBRs can threaten grid stability and have motivated the extension of traditional stability definitions to encompass inverter-driven stability \cite{TPWRS2021_Uros,TPWRS_StabilityDefinition,Gao2026InSitu}. Poorly damped oscillations have already been observed in real-world IBR-dominated power systems \cite{TPWRS2023_RealWorldOscillation}. Hence, revisiting and redesigning IBR dynamics to enhance grid stability could be essential to mitigate undesired interactions among IBRs and the whole grid. IBRs can be classified into grid-forming (GFM) inverters and grid-following (GFL) ones, and both dynamic behaviors are determined by three control loops: power control, voltage control and current control \cite{TPWRS2021_Uros}. In terms of the control structure, GFM and GFL inverters share a generalized control framework in which power and voltage regulation can be integrated into a multi-input multi-output (MIMO) dynamic controller. Such a MIMO structure increases the controller's degrees of freedom (e.g., in loop configuration and controller order) and enhances its capability for grid stabilization. However, several practical challenges remain from the vendors’ perspective: system model and sufficient stability condition.

Vendors typically lack accurate system-level models of the grid dynamics.
Existing IBR control design methods frequently rely on equivalent grid models based on metrics like grid strength such as the short-circuit ratio (SCR) \cite{TSG2022_MC,TPEL2022_ZZ,TIE2023_TVT}. While previous studies have effectively validated their methods using equivalent grid models, scaling these assessments to multi-IBR systems presents a distinct challenge. Although grid strength indicators provide valuable foundational insights, capturing the intricate stability dynamics of multi-IBR configurations typically requires more detailed representations than equivalent-grid-based designs can provide. Therefore, rethinking IBR control design with explicit consideration of multi-IBR interactions and system-level stability is essential. 
Data-enabled predictive control (DeePC) avoids the need for an explicit model of grid dynamics, but this flexibility comes at the cost of requiring sufficiently rich input–output data \cite{leng_data-driven_2026}. Furthermore, industry practice demonstrates that applying frequency scanning techniques during compliance assessments enables IBR vendors to optimize control parameters and effectively mitigate IBR-induced instabilities \cite{NESO}. 
While such power system frequency response data can be made available to vendors, integrating the frequency scanning of the wider system directly into IBR parameter tuning remains a relatively underexplored approach.

Multi-IBR integration calls for scalable stability certificates that support decentralized controller design without excessive conservatism \cite{TPWRS_LH_GainPhase,GainPhase_LW,UnifiedDe_CF,haberle_decentralized_2026}. \textit{Passivity} is commonly adopted for this purpose because it enables modular assessment without a detailed external-grid model, provided that the required passivity conditions hold \cite{bao2007process,Gao2026Explicit}.
It relies on the principle that interconnecting two passive systems guarantees stability, therefore, an IBR designed to be passive will remain stable assuming the grid is also passive \cite{Passivity_overview_Harnefors,TSG2021_JDW}. Given that IBRs cannot exhibit passive behavior at low frequencies, adaptive passivity metrics, such as extended passivity, have been utilized to guide controller design \cite{TPEL_FC_ExtendedPassivity}. While existing applications primarily rely on passivity metrics to assess IBR-grid interactions, the aggregate wider-system response generally includes the dynamics of existing IBRs and therefore cannot be presumed passive without impractically isolating their effects. Direct controller synthesis from this aggregate response to certify stability in multi-IBR environments remains relatively underexplored.

Another stability criterion called \textit{gain and phase} has also been considered for multi-IBR system stability \cite{TPWRS_LH_GainPhase}. 
While integrating the small-gain theorem into linear matrix inequality (LMI) controller design provides a robust framework based on input-output gain \cite{TSG2020_LH}, this magnitude-centric approach inherently trades off phase information, which can lead to conservative stability margins.
Although \cite{TPWRS_LH_GainPhase,GainPhase_LW} proposed a valuable decentralized stability certificate using a mixed-gain-and-phase approach to reduce conservativeness, the direct integration of phase dynamics into control synthesis remains a complex challenge. Consequently, there is significant potential to develop refined sufficient conditions that further enhance the stability of multi-IBR systems.

{\black To summarize, frequency scanning provides a practical way to characterize IBR dynamics, but directly converting such frequency-domain data into synthesis-ready constraints remains challenging. One solution is to use decentralized stability certificates based on block diagonal dominance (BDD). Classical BDD theory extends
diagonal-dominance and Gerschgorin-type arguments to
block-partitioned matrices \cite{feingold_block_1962}, with later developments linking these certificates to interconnected system stability via overlapping decompositions and vector Lyapunov functions \cite{OHTA1985396}. Recent developments further show that related block-dominance conditions can lead to block-diagonal Lyapunov certificates and lower-dimensional LMIs, offering a scalable basis for analysis \cite{sootla_block-diagonal_2017,simpson_diagonal_2022}.  However, existing applications mainly use BDD as an analysis tool, rather than as an explicit synthesis constraint for IBR controller design.}

Motivated by this gap, this paper proposes a decentralized MIMO IBR controller synthesis method with BDD-based stability constraints. By using frequency-response data to characterize subsystem dynamics and interconnection effects, the proposed method reduces the reliance on detailed state-space system models while providing synthesis-oriented conditions for IBR parameter tuning by the vendors. The main contributions are as follows.

\begin{itemize}
    \item By \textbf{leveraging system-level frequency scanning data at the IBR connection points} rather than oversimplified equivalent grid metrics, {\black as formulated in \cref{sec:pre-problem}}, the proposed $3 \times 3$ MIMO IBR controller is aligned directly with actual multi-IBR operating conditions.
    \item Formulated via \textbf{LMI-based block-diagonal dominance (BDD) constraints}, {\black in \cref{sec:ctrl-lmi}}, a decentralized sufficient condition for IBR parameter tuning guarantees the scalable and stable integration of multi-IBR systems.
    \item An integrated \textbf{frequency-domain pole placement} technique, {\black introduced in \cref{cor:decay-rate}}, ensures a minimum decay rate within the BDD-constrained synthesis. 
    \item \textbf{A novel numerical metric is proposed to explicitly quantify the conservatism} existing in the BDD decentralized stability criteria {\black in \cref{dfn:conservative-metric}}.
\end{itemize}
The remainder of this paper is organized as follows. \cref{sec:pre} reviews BDD theory and formulates the problem. \cref{sec:ctrl} presents the proposed IBR controller synthesis. \cref{sec:result} provides simulation results, and \cref{sec:conclusion} concludes the paper.

\section{Preliminary}
\label{sec:pre}

\subsection{Notations}
$\bm{0},\bm{I}$ are the zero and identity matrices respectively with appropriate dimensions; $\bm{I}_n$ is an identity matrix of size $n\times n$. 
$\bm{A}^{H}$ is the Hermitian transpose of $\bm{A}$; $\bm{A}^{-1}$ is the inverse of matrix $\bm{A}$; if $\bm{A}$ has full row rank, its right inverse is defined as $\bm{A}^\mathrm{R}:=\bm{A}^{H}(\bm{A}\bm{A}^{H})^{-1}$; if $\bm{A}$ has full column rank, its left inverse is defined as $\bm{A}^\mathrm{L}:=(\bm{A}^{H}\bm{A})^{-1}\bm{A}^{H}$.
$\mathbb{N}^{+}$ denotes the set that contains all positive real integer.
$\mathcal{G}\#\mathcal{H}$ denotes the closed-loop system of open-loop system $\mathcal{G}$ negatively interconnected by a feedback controller $\mathcal{H}$; $\mathrm{diag}\{\cdot\}$ denotes a block-diagonal structure of a class of systems.
$[\cdot](s)$, $[\cdot](j\omega)$ and $[\cdot](-\alpha+j\omega)$ are matrix functions respectively in the Laplacian domain, along the $j\omega$ axis and along the $j\omega$ with decay rate $\alpha$. The norms hereafter are induced matrix norms defined as follows.

\begin{dfn}[Induced Matrix Norms]
\label{dfn:matrix-norm}
For $\bm{A}=[a_{ij}]\in\mathbb{C}^{m\times n}$, its induced matrix norm is
$\|\bm{A}\|_{p}=\sup_{\bm{x}\neq\bm{0}}\frac{\|\bm{A}\bm{x}\|_{p}}{\|\bm{x}\|_{p}}$. Hence,
$\|\bm{A}\|_{2}=\overline{\sigma}(\bm{A})=\sqrt{\lambda_{\max}(\bm{A}^{H}\bm{A})},\,
\|\bm{A}\|_{\infty} = \max_{1\leq i\leq m}\sum_{j=1}^{n}|a_{ij}|$, which have $\|\bm{A}\|_\infty\leq\sqrt{n}\|\bm{A}\|_2$. If $\bm{A}(j\omega)$ is frequency-wise, $\|\bm{A}\|_{p}=\sup_{\bm{x}\neq\bm{0},\omega}\frac{\|\bm{A}\bm{x}\|_{p}}{\|\bm{x}\|_{p}}$.
\end{dfn}

\subsection{BDD Condition for Decentralized Stability}
\begin{lem} \label{lem:bdd}
    Let $\bm{M}$ be a square complex valued block matrix
    \begin{align}
        \bm{M} = \begin{bmatrix}
            \bm{M}_{1,1} & \bm{M}_{1,2} & \cdots & \bm{M}_{1,N} \\
            \bm{M}_{2,1} & \bm{M}_{2,2} & \cdots & \bm{M}_{2,N} \\
            \vdots & & \ddots & \vdots \\
            \bm{M}_{N,1} & \bm{M}_{N,2} & \cdots & \bm{M}_{N,N}
        \end{bmatrix},
        \label{eq:M}
    \end{align}
    and $\bm{M}_{i,j}$ be one of matrix blocks in both the $i$th row and the $j$th column. For any $i\in\{1,2,\cdots,N\}$, we define
    \begin{align*}
        \bm{M}_{i,-i} = \begin{bmatrix} \bm{M}_{i,1} & \dots & \bm{M}_{i,i-1} & \bm{M}_{i,i+1} & \dots & \bm{M}_{i,N} \end{bmatrix}.
    \end{align*}
    Then $\bm{M}$ is non-singular if it holds that 
    \begin{align}
        \|\bm{M}_{i,i}^{-1}\bm{M}_{i,-i}\|_{\infty}<1.
        \label{eq:M_bdd}
    \end{align}
\end{lem}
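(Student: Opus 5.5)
We argue by contradiction, with a Gerschgorin-type argument carried out at the block level. Implicit in condition \eqref{eq:M_bdd}, which is assumed for every $i\in\{1,\dots,N\}$, is that each diagonal block $\bm{M}_{i,i}$ is square and invertible. Suppose $\bm{M}$ were singular. Then there is a nonzero vector $\bm{x}$ with $\bm{M}\bm{x}=\bm{0}$. Partition $\bm{x}$ conformally with \eqref{eq:M} as $\bm{x}=[\bm{x}_1^{\top},\dots,\bm{x}_N^{\top}]^{\top}$, and write $\bm{x}_{-i}$ for the stacked vector of all sub-blocks except $\bm{x}_i$, ordered to match $\bm{M}_{i,-i}$.

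The $i$th block row of $\bm{M}\bm{x}=\bm{0}$ reads $\bm{M}_{i,i}\bm{x}_i+\bm{M}_{i,-i}\bm{x}_{-i}=\bm{0}$. Since $\bm{M}_{i,i}$ is invertible, this gives $\bm{x}_i=-\bm{M}_{i,i}^{-1}\bm{M}_{i,-i}\bm{x}_{-i}$. Using the vector $\infty$-norm and the induced norm of \cref{dfn:matrix-norm}, we obtain
\begin{align*}
\|\bm{x}_i\|_\infty\le\|\bm{M}_{i,i}^{-1}\bm{M}_{i,-i}\|_\infty\,\|\bm{x}_{-i}\|_\infty .
\end{align*}

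The key step is the choice of block index. Let $k$ be the index of the block containing an entry of $\bm{x}$ of largest modulus, so that $\|\bm{x}_k\|_\infty=\|\bm{x}\|_\infty>0$. Then $\|\bm{x}_{-k}\|_\infty\le\|\bm{x}\|_\infty=\|\bm{x}_k\|_\infty$. Combining this with the inequality above for $i=k$ and with \eqref{eq:M_bdd}, we get
\begin{align*}
\|\bm{x}_k\|_\infty\le\|\bm{M}_{k,k}^{-1}\bm{M}_{k,-k}\|_\infty\,\|\bm{x}_k\|_\infty<\|\bm{x}_k\|_\infty,
\end{align*}
which is a contradiction. Hence $\bm{M}$ is non-singular.

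The only subtle point is this choice of the dominant block. It is what allows the row-sum ($\infty$) norm to bound the off-diagonal coupling by the diagonal block itself. It also explains why the condition must hold for every $i$: we cannot know in advance which block will be dominant.

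For frequency-dependent $\bm{M}(j\omega)$, the same argument applies pointwise at each frequency.
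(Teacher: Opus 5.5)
Your proof is correct and complete. The paper does not prove \cref{lem:bdd} itself; it defers to \cite{steven_bdd_arxiv}. Your argument is the classical Feingold--Varga block-Gerschgorin contradiction.

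You apply it directly to the product $\bm{M}_{i,i}^{-1}\bm{M}_{i,-i}$ rather than to the split bound $\|\bm{M}_{i,i}^{-1}\|_\infty\sum_{j\neq i}\|\bm{M}_{i,j}\|_\infty$, so it covers the lemma's less restrictive hypothesis. You also rightly make explicit two points the statement leaves implicit: \eqref{eq:M_bdd} must hold for every $i$, and each $\bm{M}_{i,i}$ must be square and invertible.

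An equivalent one-line packaging is available. With $\bm{D}=\mathrm{diag}\{\bm{M}_{i,i}\}$ and $\bm{O}=\bm{M}-\bm{D}$, one has $\|\bm{D}^{-1}\bm{O}\|_\infty=\max_i\|\bm{M}_{i,i}^{-1}\bm{M}_{i,-i}\|_\infty<1$, so $\bm{M}=\bm{D}(\bm{I}+\bm{D}^{-1}\bm{O})$ is non-singular. Your choice of the dominant block $k$ simply unpacks the block-row maximum hidden in this $\infty$-norm.
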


A sufficient condition for IBR-dominated power systems can be derived based on \cref{lem:bdd}. Consider one power system with $n\in\mathbb{N}^{+}\,(n>1)$ IBRs, the model of IBRs and the rest of the grid are represented by two blocked diagonal matrices $\bm{Y}_{I}(s)$ and $\bm{Z}_{G}(s)$ respectively:
\begin{align*}
    \bm{Y}_{I} &= \mathrm{diag}\{\bm{Y}^{i}\}
    \\
    \bm{Z}_{G} &= 
        \left[
        \begin{array}{ccc}
        \bm{Z}^{11} & \cdots & \bm{Z}^{1n} \\
        \vdots & \ddots & \vdots \\
        \bm{Z}^{n1} & \cdots & \bm{Z}^{nn}
        \end{array}
        \right]
\end{align*}
with
\begin{align*}
\bm{Y}^{i} = \begin{bmatrix}
        Y_{dd}^{i} & Y_{dq}^{i} \\
        Y_{qd}^{i} & Y_{qq}^{i}
        \end{bmatrix},
\bm{Z}^{ij} = \begin{bmatrix}
        Z_{dd}^{ij} & Z_{dq}^{ij} \\
        Z_{qd}^{ij} & Z_{qq}^{ij}
        \end{bmatrix},\,\forall i,j\in\{1,\cdots,n\},
\end{align*}
which separates the closed-loop system into two parts in a negative feedback loop as \cref{fig:ibr_grid_feedback}, i.e., $\bm{Y}_{I}(s)\#\bm{Z}_{G}(s)$.

\begin{figure}[!htb]
    \centering
    \includegraphics[width=\linewidth]{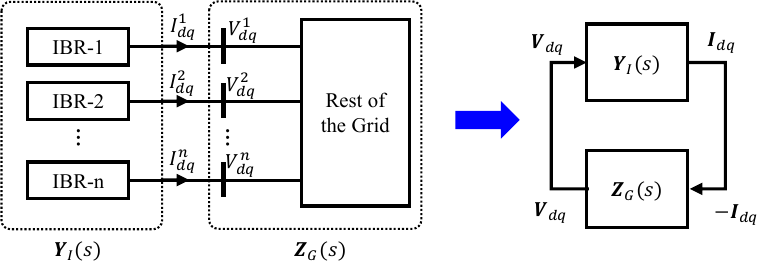}
    \caption{The power system with $n\in\mathbb{N}^{+}\,(n>1)$ IBRs, and negative feedback structure with models of IBRs and the rest of the grid.}
    \label{fig:ibr_grid_feedback}
\end{figure}

\begin{prop}[BDD Condition for Power System Stability]
    \label{prop:bdd4stability}
    Consider a power system with $n\in\mathbb{N}^{+}\,(n>1)$ IBRs and define
    $$
        \bm{L} =\bm{Y}_{I}\bm{Z}_{G}= 
        \left[
        \begin{array}{ccc}
        \bm{L}^{11} & \cdots & \bm{L}^{1n} \\
        \vdots & \ddots & \vdots \\
        \bm{L}^{n1} & \cdots & \bm{L}^{nn}
        \end{array}
        \right]
    $$ with
    $$
    \bm{L}^{ij} = \bm{Y}^{i}\bm{Z}^{ij} = \begin{bmatrix}
        L_{dd}^{ij} & L_{dq}^{ij} \\
        L_{qd}^{ij} & L_{qq}^{ij}
        \end{bmatrix},\,\forall\,i,j\in\{1,\cdots,n\},
    $$
    and
    \begin{align*}
        &\forall\,i\in\{1,\cdots,n\} ,\,\bm{M}_{i,i} = \mu\bm{I}_{2}+\bm{L}^{ii}, \\
        &\bm{M}_{i,-i} = \begin{bmatrix} \bm{L}^{i1} & \dots & \bm{L}^{i(i-1)} & \bm{L}^{i(i+1)} & \dots & \bm{L}^{in} \end{bmatrix}
    \end{align*}
    Then, the closed-loop power system $\bm{Y}_{I}(s)\#\bm{Z}_{G}(s)$ is stable if $\bm{M}(j\omega),\,\forall\,\omega\in[0,+\infty),\,\forall\,\mu\in[1,+\infty)$ is non-singular, and both $\bm{Y}_{I}(s)$ and $\bm{Z}_{G}(s)$ are internally stable.
\end{prop}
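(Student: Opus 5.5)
The argument rests on the generalized Nyquist criterion applied to the loop transfer matrix $\bm{L}(s)=\bm{Y}_{I}(s)\bm{Z}_{G}(s)$. Since $\bm{Y}_{I}(s)$ and $\bm{Z}_{G}(s)$ are internally stable, $\bm{L}(s)$ has no open right-half-plane poles. Moreover, the closed loop $\bm{Y}_{I}\#\bm{Z}_{G}$ has no unstable hidden modes: any pole of the interconnection that is not a zero of $\det(\bm{I}+\bm{L}(s))$ must come from the component poles, and these all lie in the open left half-plane. Closed-loop stability is therefore equivalent to the requirement that the Nyquist contour of $\det(\bm{I}+\bm{L}(j\omega))$ does not pass through, and makes zero net encirclements of, the origin. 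The argument also uses that $\bm{L}(j\omega)\to$ a bounded (typically strictly proper, hence zero) limit as $\omega\to\infty$, so the contour closes at a point near $\det(\mu\bm{I}+\bm{L}(\infty))\neq0$. I would state this as a standing properness assumption, and by conjugate symmetry of real-rational transfer matrices only $\omega\in[0,+\infty)$ needs to be checked.

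The key observation is that, with the stated choice of $\bm{M}_{i,i}$ and $\bm{M}_{i,-i}$, the block matrix in \eqref{eq:M} is exactly $\bm{M}(j\omega;\mu)=\mu\bm{I}_{2n}+\bm{L}(j\omega)$. The hypothesis then says $\det(\mu\bm{I}+\bm{L}(j\omega))\neq0$ for all $\omega\ge0$ and all $\mu\ge1$. I would then run a homotopy argument in $\mu$. Define $f_\mu(\omega)=\det(\mu\bm{I}+\bm{L}(j\omega))$, which is continuous in $(\mu,\omega)$ and nonvanishing on $[1,\infty)\times[0,\infty]$, with the endpoint $\omega=\infty$ handled via the limit above. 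The winding number of the closed curve $f_\mu$ around the origin is thus an integer-valued continuous function of $\mu$, hence constant on $[1,\infty)$.

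As $\mu\to\infty$, $\mu^{-2n}f_\mu(\omega)=\det(\bm{I}+\mu^{-1}\bm{L}(j\omega))\to1$ uniformly in $\omega$, because $\bm{L}$ is bounded on the imaginary axis (it is stable and proper). Hence the winding number is $0$ for large $\mu$, and therefore also at $\mu=1$. Combined with the absence of open-loop unstable poles, the generalized Nyquist criterion then gives stability of $\bm{Y}_{I}\#\bm{Z}_{G}$.

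The main obstacle is the rigorous treatment of the Nyquist contour. Two things need care: the point at infinity, which requires a uniform bound on $\|\bm{L}(j\omega)\|$ and properness, and the possibility of imaginary-axis poles, which internal stability excludes. Also needed is the justification that the winding number is homotopy invariant. I would handle the latter by noting that $(\mu,\omega)\mapsto f_\mu(\omega)$ is a continuous map into $\mathbb{C}\setminus\{0\}$ on a compactified domain, so the curves for different $\mu$ are homotopic in the punctured plane. \cref{lem:bdd} itself is not needed for this proposition; it is only the tool later used to verify the non-singularity hypothesis blockwise.
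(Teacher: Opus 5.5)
Your proof is correct and is essentially the standard argument behind this proposition; the paper itself gives no proof and defers it to its cited reference. Since $\bm{M}=\mu\bm{I}_{2n}+\bm{L}$, nonsingularity for all $\mu\geq1$ means the Nyquist locus of $\bm{L}(j\omega)$ (equivalently its eigen-loci) never meets the ray $(-\infty,-1]$, and your homotopy of $\det(\mu\bm{I}+\bm{L}(j\omega))$ from $\mu=+\infty$ down to $\mu=1$ (equivalently a zero-exclusion argument in the loop gain $1/\mu$) rigorously turns this into zero encirclements, hence closed-loop stability via the generalized Nyquist criterion.

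Your explicit treatment of $\omega=\infty$ is genuinely necessary, because the statement only checks $\omega\in[0,+\infty)$. Take $\bm{Y}_{I}=\bm{I}$ and $\bm{Z}_{G}=\ell(s)\bm{I}$ with $\ell(s)=(1-2s)/(s+1)$. The hypothesis then holds for every finite $\omega\geq0$ and $\mu\geq1$, since $\ell(j\omega)$ is real only at $\omega=0$, where it equals $1$. Yet $1+\ell(s)=(2-s)/(s+1)$, so the closed loop is unstable; the locus meets the ray only at $\ell(\infty)=-2$. Strict properness, or $\det(\mu\bm{I}+\bm{L}(\infty))\neq0$ for all $\mu\geq1$, must therefore be assumed, exactly as you do.
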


For the sake of completeness, the proofs of \cref{lem:bdd} and \cref{prop:bdd4stability} can be found in \cite{steven_bdd_arxiv}.

\subsection{Problem Formulation}\label{sec:pre-problem}

The detailed interconnected system (IBRs and rest of  grid) is outlined in \cref{fig:math_structure}. Each IBR can be modeled as $\mathcal{G}_{i}$
\begin{align}
\mathcal{G}_{i}:\left\{\begin{aligned}
    \bm{z}_{i}=\bm{G}_{i11}\bm{w}_{i}+\bm{G}_{i12}\bm{u}_{i} \\
    \bm{y}_{i}=\bm{G}_{i21}\bm{w}_{i}+\bm{G}_{i22}\bm{u}_{i}
\end{aligned}\right.,\label{eq:local-plant}
\end{align}
while the rest of the grid can be modeled as $\bm{H}(s):=\bm{Z}_{G}(s)$, which is diagonally interconnected with IBR subsystems $\mathrm{diag}\{\mathcal{G}_{i}\}$, i.e., $\mathrm{diag}\{\mathcal{G}_{i}\}\#\bm{H}(s)$. Specifically, $\bm{w}_{i}=\bm{V}_{dq}^{i}$, $
\bm{z}_{i}=\bm{I}_{dq}^{i}$, $\bm{u}_{i}=\left[I_{idr}\,I_{iqr}\,\omega_{i}\right]^{H}$ denoting IBR current control $dq$-axis reference values and IBR angular frequency, and $\bm{y}_{i}=\left[P_{im}\,V_{im}\,V_{iqm}\right]^{H}$ denoting measurement values of active power, voltage magnitude and $q$-axis voltage. For the sake of completeness, a system description is provided in \cite{ibr_model_bdd}.

\begin{figure}[!htb]
    \centering
    \includegraphics[width=\linewidth]{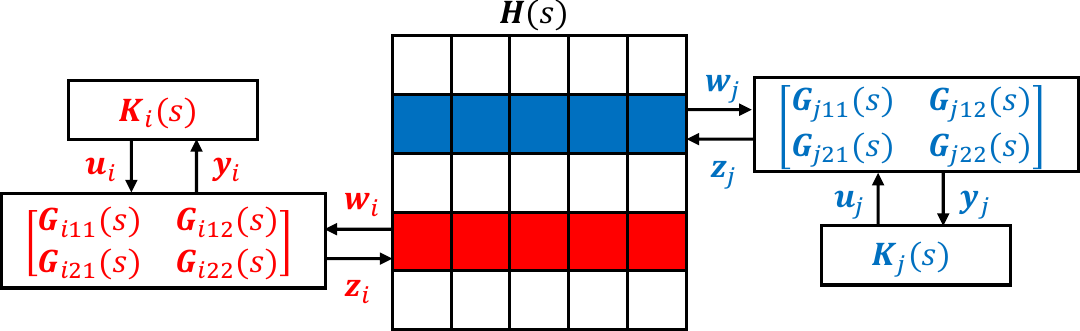}
    \caption{Detailed mathematical model of a power system with incoming IBRs: for each IBR $i$, $\bm{G}_{i11},\bm{G}_{i12},\bm{G}_{i21}$, and $\bm{G}_{i22}$ are transfer functions for open-loop local IBR model; $\bm{K}_{i}$ denotes local IBR controller; $\bm{H}$ is the frequency response for the rest of the grid looking from the IBR connection points.}
    \label{fig:math_structure}
\end{figure}

\begin{prob}[Decentralized Controller Synthesis]
    \label{prob:control}
    Consider the interconnected system depicted in \cref{fig:ibr_grid_feedback}, the objective is to synthesize a local dynamic output-feedback controller, $\bm{K}_{i}(s)$ defined by $\bm{u}_{i}=\bm{K}_{i}\bm{y}_{i}$, for the $i$-th subsystem, $\mathcal{G}_{i}$, such that the global interconnected system, denoted by $\mathrm{diag}\{\mathcal{G}_{i}\} \# \bm{H}(s)$, remains internally stable. This design is subject to a strict decentralized information constraint: the synthesis for each subsystem $i$ relies exclusively on its local model \eqref{eq:local-plant} and the provided system frequency response data $\bm{H}(j\omega)$.
\end{prob}

\begin{ass}[Pre-Integration Stability]
\label{ass:H}
    The system model represented by $\bm{H}(s)$ is internally stable at the considered operating points before incoming IBRs are connected.
\end{ass}

The system operator provides IBR vendors with the frequency response $\bm{H}(j\omega)$ of the system model, including the network and already-connected IBRs at candidate points of incoming IBR connection. In the practical compliance workflow, these data are provided by the system operator \cite{NESO}, whereas access to complete large-scale grid models and proprietary third-party IBR controls is generally infeasible. Frequency scans thus offer a practical aggregate small-signal representation without requiring disclosure of detailed white-box models.

{\black
\begin{rmk}[Vendor-Side Synthesis and Compliance Assessment]
    The same operator-provided frequency response data are used for the synthesis in \cref{prob:control} and the subsequent compliance assessment \cite{steven_bdd_arxiv}. The vendor designs or retunes its local controller to satisfy the BDD-based certificate that will be introduced later, thereby incorporating the compliance condition into the design stage rather than relying on repeated post-design testing and manual tuning.
\end{rmk}
}

\section{BDD-Constrained Decentralized IBR Control}
\label{sec:ctrl}

In this section, the BDD condition for power system stability, i.e., \cref{prop:bdd4stability} will be converted into solvable LMIs for decentralized IBR controller synthesis, where a minimum decay rate, seen as a frequency-domain regional pole placement is further integrated. Beyond this, we analyze conservativeness of the proposed decentralized IBR controller synthesis via defining a numerical metric.


\subsection{Controller Synthesis}\label{sec:ctrl-lmi}

The key idea is to utilize BDD condition in \cref{prop:bdd4stability}, as a decentralized stability criterion for controller synthesis of $\bm{K}_{i}$. 
In a block-diagonally interconnected system $\mathrm{diag}\{\mathcal{G}_{i}\}\#\bm{H}$, the grid response $\bm{H}$ can be split into its local diagonal block $\bm{H}_{i,i}$ and off-diagonal blocks
$$
\bm{H}_{i,-i} = \begin{bmatrix} \bm{H}_{i,1} & \dots & \bm{H}_{i,i-1} & \bm{H}_{i,i+1} & \dots & \bm{H}_{i,n} \end{bmatrix}.
$$

\begin{prop}[Decentralized Stability Condition for Controller Synthesis]
    \label{prop:aug-bdd}
    Consider an interconnected system $\mathrm{diag}\{\mathcal{G}_{i}\}\#\bm{H}(s)$, through \cref{ass:H} and defining $\bm{T}_{i}$ as $\bm{z}_{i}=\bm{T}_{i}\bm{w}_{i}$, controllers $\bm{K}_{i}$ stabilize the whole system if local subsystems $\mathcal{G}_{i}$ are internally stabilized by $\bm{K}_{i}$ and the following condition satisfies.
    \begin{align}
        \begin{aligned}
            &\forall\,\omega\in[0,+\infty),\,\forall\,\mu\in[1,+\infty),\\
            &\quad\|\left(\mu\bm{I}_{2}+\bm{T}_{i}\bm{H}_{i,i}\right)^{-1}\bm{T}_{i}\bm{H}_{i,-i}\|_{\infty}<1
        \end{aligned}\label{eq:bdd4thm1}
    \end{align}
\end{prop}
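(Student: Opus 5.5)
The plan is to reduce \cref{prop:aug-bdd} to \cref{prop:bdd4stability} by identifying the closed-loop local IBR admittance with $\bm{Y}^{i}$ and the operator-provided scan $\bm{H}$ with $\bm{Z}_{G}$. First close each local loop $\mathcal{G}_{i}\#\bm{K}_{i}$: substituting $\bm{u}_{i}=\bm{K}_{i}\bm{y}_{i}$ into \eqref{eq:local-plant} and solving the second equation gives $\bm{u}_{i}=(\bm{I}-\bm{K}_{i}\bm{G}_{i22})^{-1}\bm{K}_{i}\bm{G}_{i21}\bm{w}_{i}$. This is well posed because $\bm{K}_{i}$ is assumed to internally stabilize $\mathcal{G}_{i}$. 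It yields the lower linear fractional transformation
\begin{align*}
\bm{T}_{i}=\bm{G}_{i11}+\bm{G}_{i12}(\bm{I}-\bm{K}_{i}\bm{G}_{i22})^{-1}\bm{K}_{i}\bm{G}_{i21}.
\end{align*}
This map takes $\bm{w}_{i}=\bm{V}^{i}_{dq}$ to $\bm{z}_{i}=\bm{I}^{i}_{dq}$, so $\bm{T}_{i}$ plays exactly the role of $\bm{Y}^{i}$. Setting $\bm{Y}_{I}=\mathrm{diag}\{\bm{T}_{i}\}$ and $\bm{Z}_{G}=\bm{H}$ rewrites the interconnection $\mathrm{diag}\{\mathcal{G}_{i}\}\#\bm{H}$ as $\bm{Y}_{I}\#\bm{Z}_{G}$, which is the setting of \cref{fig:ibr_grid_feedback}.

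Next I would check the hypotheses of \cref{prop:bdd4stability}.
\begin{itemize}
\item \emph{Internal stability of $\bm{Z}_{G}$.} This is \cref{ass:H}.
\item \emph{Internal stability of $\bm{Y}_{I}$.} Each $\bm{T}_{i}$ is a closed-loop map of an internally stabilized local loop, hence stable, and a block-diagonal collection of stable blocks is stable.
\item \emph{Block structure of $\bm{L}$.} With $\bm{L}=\bm{Y}_{I}\bm{Z}_{G}$ we get $\bm{L}^{ij}=\bm{T}_{i}\bm{H}_{i,j}$. The $i$th diagonal block of $\bm{M}$ is then $\mu\bm{I}_{2}+\bm{T}_{i}\bm{H}_{i,i}$, and its off-diagonal row is $\bm{M}_{i,-i}=\bm{T}_{i}\bm{H}_{i,-i}$.
\end{itemize}

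Then apply \cref{lem:bdd} pointwise. Fix any $\omega\ge 0$ and $\mu\ge1$. Condition \eqref{eq:bdd4thm1} presupposes that $\mu\bm{I}_{2}+\bm{T}_{i}\bm{H}_{i,i}$ is invertible, and it states $\|\bm{M}_{i,i}^{-1}\bm{M}_{i,-i}\|_\infty<1$ for every $i$. This is exactly \eqref{eq:M_bdd}, so $\bm{M}(j\omega)$ is non-singular. Since this holds for all $\omega\in[0,+\infty)$ and all $\mu\in[1,+\infty)$, \cref{prop:bdd4stability} gives stability of $\bm{Y}_{I}\#\bm{Z}_{G}$, which is the whole system.

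The main obstacle is the first step: justifying that internal stability of the full interconnection $\mathrm{diag}\{\mathcal{G}_{i}\}\#\bm{H}$ is equivalent to internal stability of the reduced loop $\mathrm{diag}\{\bm{T}_{i}\}\#\bm{H}$. One must rule out unstable pole--zero cancellations hidden inside each $\bm{T}_{i}$. I would handle this by noting two facts. First, local internal stability makes every map from exogenous signals injected at $\bm{u}_{i},\bm{y}_{i}$ stable. Second, the remaining closed-loop maps factor through $(\bm{I}+\mathrm{diag}\{\bm{T}_{i}\}\bm{H})^{-1}$, which is stable by the outer-loop conclusion. This is a standard LFT/cascade argument, and I would state it as a remark rather than a full derivation.
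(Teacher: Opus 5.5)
Your proposal is correct and follows the paper's own proof, which simply invokes \cref{prop:bdd4stability} with the identification $\bm{T}_{i}=\bm{Y}^{i}$ and $\bm{H}_{i,j}=\bm{Z}^{ij}$. Your extra steps fill in details the paper leaves implicit: the LFT expression for $\bm{T}_{i}$, stability of $\bm{Y}_{I}$ obtained from local internal stability, the pointwise application of \cref{lem:bdd}, and the nested-loop internal-stability remark.
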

\begin{IEEEproof}
    The proof of \cref{prop:aug-bdd} can be achieved based on \cref{prop:bdd4stability} by imposing $\bm{T}_{i}=\bm{Y}^{i}$ and $\bm{H}_{i,j}=\bm{Z}^{ij}$.
\end{IEEEproof}

To obtain \eqref{eq:bdd4thm1} for further controller synthesis, we introduce 
\begin{align*}
    &\bm{w}_{i} = \frac{1}{\mu}\left(\bm{H}_{i,-i}\bm{d}_{i}-\bm{H}_{i,i}\bm{z}_{i}\right) 
    \\
    \implies\quad &\bm{z}_{i}=\left(\mu\bm{I}_{2}+\bm{T}_{i}\bm{H}_{i,i}\right)^{-1}\bm{T}_{i}\bm{H}_{i,-i}\bm{d}_{i}.
\end{align*}
Then we have $\bm{z}_{i}=\hat{\bm{T}}_{i}\bm{d}_{i}$ with an augmented system $\hat{\mathcal{G}}_{i}$
\begin{align}
    \hat{\mathcal{G}}_{i}:\left\{\begin{aligned}
        \bm{z}_{i}=\hat{\bm{G}}_{i11}\bm{d}_{i}+\hat{\bm{G}}_{i12}\bm{u}_{i} \\
        \bm{y}_{i}=\hat{\bm{G}}_{i21}\bm{d}_{i}+\hat{\bm{G}}_{i22}\bm{u}_{i}
    \end{aligned}\right.,\label{eq:aug-plant}
\end{align}
where $\bm{u}_{i}=\bm{K}_{i}\bm{y}_{i}$ and
\begin{align*}
    &\hat{\bm{G}}_{i11} = \left(\mu\bm{I}_{2} + \bm{G}_{i11}\bm{H}_{i,i}\right)^{-1}\bm{G}_{i11}\bm{H}_{i,-i}, 
    \\
    &\hat{\bm{G}}_{i12} = \mu\left(\mu\bm{I}_{2} + \bm{G}_{i11}\bm{H}_{i,i}\right)^{-1}\bm{G}_{i12}, 
    \\
    &\hat{\bm{G}}_{i21} = \frac{1}{\mu}\left(\bm{G}_{i21}\bm{H}_{i,-i} - \bm{G}_{i21}\bm{H}_{i,i}\hat{\bm{G}}_{i11}\right), 
    \\
    &\hat{\bm{G}}_{i22} = \bm{G}_{i22} - \frac{1}{\mu}\bm{G}_{i21}\bm{H}_{i,i}\hat{\bm{G}}_{i12}.
\end{align*}
Therefore, \eqref{eq:bdd4thm1} becomes
\begin{align}
    \label{eq:bdd4aug-plant}
    \forall\,\omega\in[0,+\infty),\,\forall\,\mu\in[1,+\infty), \|\hat{\bm{T}}_{i}\|_{\infty} < 1.
\end{align}
Inspired by \cite{SCHUCHERT2024111398}, we have the following main results for IBR control design.
\begin{thm}[Constraints for Controller Synthesis]
\label{thm:constraints}
    Consider a power system with $n$ IBRs as \cref{fig:math_structure} with \cref{ass:H}, based on the systems $\mathcal{G}_{i}$ and $\hat{\mathcal{G}}_{i}$, both IBRs and the interconnected system are stabilized if $\exists\gamma>0$, $\forall\,\omega\in[0,+\infty)$, $\forall\,\mu\in[1,+\infty)$, $\forall\,i\in\{1,\cdots,n\}$,
    \begin{subequations}
    \begin{align}
        &\begin{bmatrix}
        \frac{1}{n_t}\bm{I} - \hat{\bm{\Lambda}}_{i} & \hat{\bm{G}}_{i11} \hat{\bm{\Phi}}_{i} + \hat{\bm{G}}_{i12}\bm{X}_{i} \\
        \star & \hat{\bm{\Phi}}_{i}^{H}\hat{\bm{\Phi}}_{i}
        \end{bmatrix}(j\omega) > 0,
        \label{eq:stability-bdd}
        \\
        &\begin{bmatrix}
        \gamma\bm{I} - \bm{\Lambda}_{i} & \bm{G}_{i11} \bm{\Phi}_{i} + \bm{G}_{i12}\bm{X}_{i} \\
        \star & \bm{\Phi}_{i}^{H}\bm{\Phi}_{i}
        \end{bmatrix}(j\omega)\geq 0,
        \label{eq:stability-local}
    \end{align}
    \label{eq:stability}
    \end{subequations}
    where $n_t$ denotes the column number of $\bm{H}_{i,-i}$, and 
    \begin{align*}
        \bm{K}_{i} &= \bm{X}_{i}\bm{Y}_{i}^{-1}, \\
        \hat{\bm{\Phi}}_{i} &= \hat{\bm{G}}_{i21}^R        \left(\bm{Y}_{i} - \hat{\bm{G}}_{i22}\bm{X}_{i}\right),\, 
        \bm{\Phi}_{i} = \bm{G}_{i21}^R 
        \left(\bm{Y}_{i} - \bm{G}_{i22}\bm{X}_{i}\right), \\
        \hat{\bm{\Psi}}_{i} &= \bm{I} - \hat{\bm{G}}_{i21}^R\hat{\bm{G}}_{i21},\,
        \bm{\Psi}_{i} = \bm{I} - \bm{G}_{i21}^R\bm{G}_{i21}, \\
        \hat{\bm{\Lambda}}_{i} &= \left(\hat{\bm{G}}_{i11}\hat{\bm{\Psi}}_{i}\right) \left(\hat{\bm{G}}_{i11}\hat{\bm{\Psi}}_{i}\right)^{H},\,
        \bm{\Lambda}_{i} = \left(\bm{G}_{i11}\bm{\Psi}_{i}\right) \left(\bm{G}_{i11}\bm{\Psi}_{i}\right)^{H}.
    \end{align*}
\end{thm}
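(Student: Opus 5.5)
The plan is to reduce the theorem to \cref{prop:aug-bdd} and the equivalent form \eqref{eq:bdd4aug-plant}. Concretely, I will show that \eqref{eq:stability-bdd} gives $\|\hat{\bm{T}}_{i}\|_{\infty}<1$ frequency-wise for every $\mu\geq1$, and that \eqref{eq:stability-local} together with \cref{ass:H} gives internal stability of the local loops $\mathcal{G}_{i}\#\bm{K}_{i}$. The argument follows the frequency-domain synthesis idea of \cite{SCHUCHERT2024111398}. The closed-loop map of a generic plant $(\bm{G}_{11},\bm{G}_{12},\bm{G}_{21},\bm{G}_{22})$ under $\bm{u}=\bm{K}\bm{y}$ is $\bm{T}=\bm{G}_{11}+\bm{G}_{12}\bm{K}(\bm{I}-\bm{G}_{22}\bm{K})^{-1}\bm{G}_{21}$. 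I will parametrize $\bm{K}=\bm{X}\bm{Y}^{-1}$ and write $\bm{Q}:=\bm{K}(\bm{I}-\bm{G}_{22}\bm{K})^{-1}=\bm{X}(\bm{Y}-\bm{G}_{22}\bm{X})^{-1}$. The first key step is then the algebraic identity
\begin{align*}
\bm{T}=\bm{G}_{11}\bm{\Psi}+\left(\bm{G}_{11}\bm{\Phi}+\bm{G}_{12}\bm{X}\right)\left(\bm{Y}-\bm{G}_{22}\bm{X}\right)^{-1}\bm{G}_{21},
\end{align*}
which follows from $\bm{I}=\bm{\Psi}+\bm{G}_{21}^{R}\bm{G}_{21}$, using that $\bm{G}_{21}$ has full row rank so that $\bm{G}_{21}^{R}$ exists. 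Here $\bm{\Phi}=\bm{G}_{21}^{R}(\bm{Y}-\bm{G}_{22}\bm{X})$, so the second term has the form $\bm{A}\bm{\Phi}^{R}$ up to the projector structure. I will apply this identity both to the original plant, giving $\bm{T}_{i}$, and to the augmented plant \eqref{eq:aug-plant}, giving $\hat{\bm{T}}_{i}$.

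The second step is a Schur complement of \eqref{eq:stability-bdd}. Since $\hat{\bm{\Phi}}_{i}^{H}\hat{\bm{\Phi}}_{i}>0$, it yields
\begin{align*}
\tfrac{1}{n_t}\bm{I}-\hat{\bm{\Lambda}}_{i}-\bm{N}_{i}(\hat{\bm{\Phi}}_{i}^{H}\hat{\bm{\Phi}}_{i})^{-1}\bm{N}_{i}^{H}>0, \qquad \bm{N}_{i}:=\hat{\bm{G}}_{i11}\hat{\bm{\Phi}}_{i}+\hat{\bm{G}}_{i12}\bm{X}_{i}.
\end{align*}
I then use that $\hat{\bm{\Psi}}_{i}$ is an orthogonal projector with $\hat{\bm{\Psi}}_{i}\hat{\bm{G}}_{i21}^{R}=\bm{0}$, and that the range of $\hat{\bm{G}}_{i21}^{H}$ is orthogonal to that of $\hat{\bm{\Psi}}_{i}$. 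With these, the cross terms in $\hat{\bm{T}}_{i}\hat{\bm{T}}_{i}^{H}$ vanish, and $\hat{\bm{T}}_{i}\hat{\bm{T}}_{i}^{H}$ is bounded by $\hat{\bm{\Lambda}}_{i}+\bm{N}_{i}(\hat{\bm{\Phi}}_{i}^{H}\hat{\bm{\Phi}}_{i})^{-1}\bm{N}_{i}^{H}$. The bound uses $(\bm{Y}-\hat{\bm{G}}_{22}\bm{X})^{-1}\hat{\bm{G}}_{21}=\hat{\bm{\Phi}}^{L}$-type relations, i.e.\ $\hat{\bm{G}}_{21}\hat{\bm{\Phi}}^{\dagger}$ has norm controlled by $(\hat{\bm{\Phi}}^{H}\hat{\bm{\Phi}})^{-1}$. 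This gives $\|\hat{\bm{T}}_{i}\|_{2}^{2}<1/n_t$. By \cref{dfn:matrix-norm}, $\|\hat{\bm{T}}_{i}\|_{\infty}\leq\sqrt{n_t}\|\hat{\bm{T}}_{i}\|_{2}<1$, where $n_t$ is the column dimension of $\hat{\bm{T}}_{i}$, namely that of $\bm{H}_{i,-i}$. Since this holds for all $\omega$ and all $\mu\geq1$, \eqref{eq:bdd4aug-plant}, and hence \eqref{eq:bdd4thm1}, follows.

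The third step treats \eqref{eq:stability-local} the same way: the identical computation gives $\bm{T}_{i}\bm{T}_{i}^{H}\leq\gamma\bm{I}$ on the $j\omega$ axis. This says $\bm{T}_{i}$ is bounded along the imaginary axis, i.e.\ the local closed loop has no $j\omega$-axis poles. To upgrade this to internal stability I would argue by continuity and homotopy. Combined with \cref{ass:H} and the fact that the open-loop IBR model is stable, the finite gain for every $\omega$ is meant to certify the local loop in the sense of \cite{SCHUCHERT2024111398}. Both hypotheses of \cref{prop:aug-bdd} then hold, so the interconnected system $\mathrm{diag}\{\mathcal{G}_{i}\}\#\bm{H}$ is stable.

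I expect the main obstacle to be the second step. The bound on $\hat{\bm{T}}_{i}$ must be made rigorous when $\hat{\bm{G}}_{i21}$ is only right-invertible, which requires verifying carefully that the cross terms vanish through the projector structure. The conversion from bounded $\bm{T}_{i}$ on $j\omega$ to internal stability in the third step is the other delicate point. There I would first try the argument that the LMI holds for all $\omega$, so that no closed-loop pole can cross the imaginary axis as the controller gain is scaled from $0$ to $\bm{K}_{i}$, which preserves the stability of the open-loop local model.
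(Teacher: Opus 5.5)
Your first two steps match the paper's proof. The paper also uses $(\bm{Y}_i-\hat{\bm{G}}_{i22}\bm{X}_i)^{-1}\hat{\bm{G}}_{i21}=\hat{\bm{\Phi}}_i^{L}$; it is a left inverse because $\hat{\bm{\Phi}}_i$ is tall, so your $\bm{\Phi}^{R}$ is a slip. It then writes $\hat{\bm{T}}_i=\bm{N}_i\hat{\bm{\Phi}}_i^{L}+\hat{\bm{G}}_{i11}\hat{\bm{\Psi}}_i$. Since $\hat{\bm{\Phi}}_i^{L}\hat{\bm{\Psi}}_i=\bm{0}$ and $\hat{\bm{\Phi}}_i^{L}(\hat{\bm{\Phi}}_i^{L})^{H}=(\hat{\bm{\Phi}}_i^{H}\hat{\bm{\Phi}}_i)^{-1}$, you get $\hat{\bm{T}}_i\hat{\bm{T}}_i^{H}=\bm{N}_i(\hat{\bm{\Phi}}_i^{H}\hat{\bm{\Phi}}_i)^{-1}\bm{N}_i^{H}+\hat{\bm{\Lambda}}_i$ as an exact identity, not just a bound. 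The Schur complement and $\|\cdot\|_\infty\le\sqrt{n_t}\|\cdot\|_2$ then give the BDD part. The step you flagged as the main obstacle is in fact routine.

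The genuine gap is your third step, and the route you propose would fail. The paper's proof only asserts local stability by analogy, and after the theorem it defers to the Nyquist argument of \cite{SCHUCHERT2024111398} with a known locally stabilizing $\bm{K}_{ic}$. Your replacement has three problems.
\begin{itemize}
\item A bound $\bm{T}_i\bm{T}_i^{H}\le\gamma\bm{I}$ on the $j\omega$ axis says nothing about right-half-plane poles; $1/(s-1)$ has gain at most one there. At best it rules out imaginary-axis poles.
\item Your homotopy $\tau\bm{K}_i$, $\tau\in[0,1]$, would need the no-crossing property for every $\tau$. But \eqref{eq:stability-local} is imposed only at $\tau=1$; you have conflated ``for all $\omega$'' with ``for all $\tau$'', and intermediate controllers can move poles across the axis.
\item The starting point is not available. \cref{ass:H} concerns $\bm{H}$, not $\mathcal{G}_i$, and the open-loop IBR model need not be stable: with $\omega_i$ as an input, the angle is its integral. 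This is why \cref{alg:1} requires an initial stabilizing $\bm{K}_{ic}$ rather than $\bm{K}_i=\bm{0}$.
\end{itemize}

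The missing idea is a winding-number comparison with $\bm{K}_{ic}=\bm{X}_{ic}\bm{Y}_{ic}^{-1}$. Write $\bm{R}=(\bm{Y}_{ic}-\bm{G}_{i22}\bm{X}_{ic})^{-1}(\bm{Y}_i-\bm{G}_{i22}\bm{X}_i)$ and $\bm{Q}=\bm{\Phi}_{ic}^{H}\bm{\Phi}_{ic}>0$. The linearized $(2,2)$ block $\bm{\Phi}_{ic}^{H}\bm{\Phi}_i+\bm{\Phi}_i^{H}\bm{\Phi}_{ic}-\bm{\Phi}_{ic}^{H}\bm{\Phi}_{ic}>0$ then gives $\bm{Q}\bm{R}+\bm{R}^{H}\bm{Q}>0$. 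So the eigenvalues of $\bm{R}(j\omega)$ stay in the open right half-plane, and $\det\bm{R}$ has zero winding. By the Nyquist criterion, $\bm{K}_i$ then yields the same number of unstable closed-loop poles as $\bm{K}_{ic}$, i.e.\ none, under the conditions in \cite{SCHUCHERT2024111398} on plant poles and on the unstable zeros of $\det\bm{Y}_i$ and $\det\bm{Y}_{ic}$. This certifies local stability only through the convexified constraint together with a stabilizing initial controller, not through \eqref{eq:stability-local} alone. Without it, the local-stability hypothesis of \cref{prop:aug-bdd} is unverified, so the interconnected conclusion does not follow.
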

\begin{IEEEproof}
    The proof will include three parts according to \cref{prop:bdd4stability}. Firstly, $\bm{H}(s)$ is internally stable based on \cref{ass:H}.

    Secondly, we prove the interconnected stability based on \eqref{eq:stability-bdd}, i.e., \eqref{eq:stability-bdd}$\implies$\eqref{eq:bdd4aug-plant}. With $\bm{K}_{i}=\bm{X}_{i}\bm{Y}_{i}^{-1}$, we have
    \begin{align}
    \begin{aligned}
        \hat{\bm{T}}_{i} &= \hat{\bm{G}}_{i11} +\hat{\bm{G}}_{i12}\bm{X}_{i}(\bm{Y}_{i}-\hat{\bm{G}}_{i22}\bm{X}_{i})^{-1}\hat{\bm{G}}_{i21}
        \\
        &= \hat{\bm{G}}_{i11} +\hat{\bm{G}}_{i12}\bm{X}_{i}\hat{\bm{\Phi}}_{i}^{L} 
        \\
        &= \left(\hat{\bm{G}}_{i11}\hat{\bm{\Phi}}_{i}+\hat{\bm{G}}_{i12}\bm{X}_{i}\right)\hat{\bm{\Phi}}_{i}^{L} + \hat{\bm{G}}_{i11}\hat{\bm{\Psi}}_{i}
    \end{aligned},
    \label{eq:RF1-1}
    \end{align}
    \begin{align}
    \begin{aligned}
        &\implies\hat{\bm{T}}_{i}\hat{\bm{T}}_{i}^{H}= \\
        &\left(\hat{\bm{G}}_{i11}\hat{\bm{\Phi}}_{i}+\hat{\bm{G}}_{i12}\bm{X}_{i}\right)\left(\hat{\bm{\Phi}}_{i}^{H}\hat{\bm{\Phi}}_{i}\right)^{L}\left(\hat{\bm{G}}_{i11}\hat{\bm{\Phi}}_{i}+\hat{\bm{G}}_{i12}\bm{X}_{i}\right)^{H}
        \\
        &\quad+ \left(\hat{\bm{G}}_{i11}\hat{\bm{\Psi}}_{i}\right) \left(\hat{\bm{G}}_{i11}\hat{\bm{\Psi}}_{i}\right)^{H}\leq\|\hat{\bm{T}}_{i}\|_{2}^{2}\cdot\bm{I}.
    \end{aligned}
    \label{eq:RF1-2}
    \end{align}
    Using matrix norm inequality in \cref{dfn:matrix-norm} and Schur Complement, there exists 
    \begin{align}
        \begin{aligned}
        & \|\hat{\bm{T}}_{i}\|_{\infty}^{2}\leq{n_t}\|\hat{\bm{T}}_{i}\|_{2}^{2}<1 
        \iff \\
        &\quad\begin{bmatrix}
        \frac{1}{n_t}\bm{I} - \hat{\bm{\Lambda}}_{i} & \hat{\bm{G}}_{i11} \hat{\bm{\Phi}}_{i} + \hat{\bm{G}}_{i12}\bm{X}_{i} \\
        \star & \hat{\bm{\Phi}}_{i}^{H}\hat{\bm{\Phi}}_{i}
        \end{bmatrix} > 0.
        \end{aligned}
        \label{eq:RF1-3}
    \end{align}

    Thirdly, followed by a similar procedure \eqref{eq:RF1-1} -- \eqref{eq:RF1-3} to local subsystems \eqref{eq:local-plant}, \eqref{eq:stability-local} implies local stability.
\end{IEEEproof}

\begin{rmk}[Solution to Model Rank Deficit]
    For the system $\hat{\mathcal{G}}_{i}$, $\mathrm{rank}\left(\hat{\bm{G}}_{i21}\right)=2$ while $\bm{y}_{i}$ having 3 outputs. To satisfy the condition $\forall\,\omega\in[0,+\infty),\hat{\bm{G}}_{i21}(j\omega)$ having full row rank for its right inverse $\hat{\bm{G}}_{i21}^{R}$ in \eqref{eq:RF1-1} as \cite{SCHUCHERT2024111398}, a regularization process
    \begin{align*}
        &\bm{z}_{i} = \begin{bmatrix} \hat{\bm{G}}_{i11} & \bm{0} \end{bmatrix} \begin{bmatrix} \bm{w}_{i} \\ \bm{v}_{i} \end{bmatrix} + \hat{\bm{G}}_{i12}\bm{u}_{i}, \\
        &\bm{y}_{i} = \begin{bmatrix} \hat{\bm{G}}_{i21} & \epsilon_{y}\bm{I}_{3} \end{bmatrix} \begin{bmatrix} \bm{w}_{i} \\ \bm{v}_{i} \end{bmatrix} + \hat{\bm{G}}_{i22}\bm{u}_{i}
    \end{align*}
     is applied with a small constant $\epsilon_{y}>0$. This regularization process should also be applied to the IBR system $\mathcal{G}_{i}$.
\end{rmk}

\begin{rmk}[Controller Structure]
    In \cref{thm:constraints}, the controller $\bm{K}_{i}=\bm{X}_{i}\bm{Y}_{i}^{-1}$ can be designed as matrix polynomials through parametrization as \cite{SCHUCHERT2024111398}, which releases full degrees of freedom on controller structure and order. In other words, $\bm{K}_{i}$, a $3\times3$ MIMO controller can be assigned in any structures with predefined orders. {\black Thanks to this flexibility, appropriate weighting filters allow the proposed approach to specify input–output control bandwidths and enforce desired GFL- or GFM-like IBR dynamic characteristics \cite{TSG2020_LH}. In this paper, the proposed method does not prescribe GFL- or GFM-like IBR dynamic characteristics. Instead, it shapes MIMO controller dynamics for improved performance under BDD-based stability constraints.}
\end{rmk}

Constraints in \cref{thm:constraints} cannot be solved owing to non-convex terms $\hat{\bm{\Phi}}_{i}^{H}\hat{\bm{\Phi}}_{i},\bm{\Phi}_{i}^{H}\bm{\Phi}_{i}$, which can be convexification through Taylor expansion, i.e.,
\begin{align*}
    \hat{\bm{\Phi}}_{i}^{H}\hat{\bm{\Phi}}_{i} \geq \hat{\bm{\Phi}}_{i}^{H}\hat{\bm{\Phi}}_{ic}+\hat{\bm{\Phi}}_{ic}^{H}\hat{\bm{\Phi}}_{i}-\hat{\bm{\Phi}}_{ic}^{H}\hat{\bm{\Phi}}_{ic}, \\
    \bm{\Phi}_{i}^{H}\bm{\Phi}_{i} \geq \bm{\Phi}_{ic}^{H}\bm{\Phi}_{i}+\bm{\Phi}_{i}^{H}\bm{\Phi}_{ic}-\bm{\Phi}_{ic}^{H}\bm{\Phi}_{ic},
\end{align*}
where $\hat{\bm{\Phi}}_{ic},\bm{\Phi}_{ic}$ are arbitrary known matrices via $$\hat{\bm{\Phi}}_{ic} = \hat{\bm{G}}_{i21}^R\left(\bm{Y}_{ic}-\hat{\bm{G}}_{i22}\bm{X}_{ic}\right),\,\bm{\Phi}_{ic} = \bm{G}_{i21}^R\left(\bm{Y}_{ic} - \bm{G}_{i22}\bm{X}_{ic}\right).$$ Then \eqref{eq:stability} is linearized into
\begin{subequations}
\begin{align}
    &\begin{bmatrix}
    \frac{1}{n_t}\bm{I} - \hat{\bm{\Lambda}}_{i} & \hat{\bm{G}}_{i11} \hat{\bm{\Phi}}_{i} + \hat{\bm{G}}_{i12}\bm{X}_{i} \\
    \star & \hat{\bm{\Phi}}_{i}^{H}\hat{\bm{\Phi}}_{ic}+\hat{\bm{\Phi}}_{ic}^{H}\hat{\bm{\Phi}}_{i}-\hat{\bm{\Phi}}_{ic}^{H}\hat{\bm{\Phi}}_{ic}
    \end{bmatrix}(j\omega) > 0,
    \label{eq:stability-bdd2}
    \\
    &\begin{bmatrix}
    \gamma\bm{I} - \bm{\Lambda}_{i} & \bm{G}_{i11} \bm{\Phi}_{i} + \bm{G}_{i12}\bm{X}_{i} \\
    \star & \bm{\Phi}_{ic}^{H}\bm{\Phi}_{i}+\bm{\Phi}_{i}^{H}\bm{\Phi}_{ic}-\bm{\Phi}_{ic}^{H}\bm{\Phi}_{ic}
    \end{bmatrix}(j\omega)\geq 0.
    \label{eq:stability-local2}
\end{align}
\label{eq:stability2}
\end{subequations}
Thanks to the stability analysis through Nyquist plot in \cite{SCHUCHERT2024111398}, it implies interconnected stability and IBR stability are still satisfied with \eqref{eq:stability-bdd2} and \eqref{eq:stability-local2} respectively if the known controller $\bm{K}_{ic}$ is a locally stabilizing controller.

\begin{rmk}[Practical Implementation of Controller Synthesis]
    Both sufficient conditions in \cref{prop:aug-bdd} and \cref{thm:constraints} should be considered for $\forall\,\omega\in[0,+\infty)$ and $\forall\,\mu\in[1,+\infty)$ for each IBR system. A common approach is to sample these infinite number of LMI constraints for all points in $\Omega=\{\omega_{1},\cdots,\omega_{k}\}\subset[0,+\infty)$ and $\Upsilon\in\{\mu_{1},\cdots,\mu_{t}\}\subset[1,+\infty)$, leading to a large number, i.e., $(k\cdot t)$ BDD-based LMI constraints that can be handled via numerical semi-definite programming solvers. The overall procedure for the proposed controller synthesis is detailed in \cref{alg:1}.
\end{rmk}

\begin{algorithm}[!htb]
    \caption{BDD-based decentralized synthesis problem implementation of IBR MIMO control}\label{alg:1}
    \KwData{Frequency spectral $\bm{Z}_{G}(s)$ at incoming IBR connection points from System Operators; system model \eqref{eq:local-plant}; initial stabilized controllers $\bm{K}_{ic}=\bm{X}_{ic}\bm{Y}_{ic}^{-1}$; frequency samples $\Omega=\{\omega_{1},\cdots,\omega_{k}\}$; BDD constant samples $\Upsilon\in\{\mu_{1},\cdots,\mu_{t}\}$.}
    \KwResult{Controllers $\bm{K}_{i}$ for IBR Vendors.}
    \BlankLine
    \ForEach{IBR $i\in\{1,\cdots,n\}$}{
        \Repeat{$|\gamma^{\star}-\gamma|\leq\varepsilon$}{
        \tcc{Solve the sampled-frequency version of the optimization  problem; initialization by $\gamma=+\infty$, pre-defined convergence tolerance $\varepsilon$}
        Renew $\gamma^{\star} = \gamma$ \;
        Optimize to update $\gamma$
        \begin{align}
        \begin{aligned}
            &\min_{\bm{X}_{i},\bm{Y}_{i}}\gamma \\
            &\forall\,\omega\in\Omega,\,\forall\,\mu\in\Upsilon,\,\eqref{eq:stability2}
        \end{aligned}
        \label{eq:stability-alg:1}
        \end{align}
        }
    Obtain controller via $\bm{K}_{i}=\bm{X}_{i}\bm{Y}_{i}^{-1}$\;
    }
\end{algorithm}

\subsection{Conservativeness Analysis and Minimum Decay Rate}\label{sec:ctrl-conservative-decay}


In \cref{thm:constraints}, \eqref{eq:stability-bdd} and \eqref{eq:stability-local} ensure BDD-based decentralized stability and local IBR stability respectively. \eqref{eq:stability-bdd} brings inevitable conservativeness, hence, there exists interconnected stability where \eqref{eq:stability-bdd3} satisfies with BDD constant $\kappa>1$, i.e., $\|\left(\mu\bm{I}_{2}+\bm{T}_{i}\bm{H}_{i,i}\right)^{-1}\bm{T}_{i}\bm{H}_{i,-i}\|_{\infty}<\kappa$.
\begin{align}
\begin{aligned}
    & \forall\,\omega\in[0,+\infty), \forall\,\mu\in[1,+\infty), \exists\kappa>1, \\
    &\quad\begin{bmatrix}
    \frac{1}{n_t}\kappa^2\bm{I} - \hat{\bm{\Lambda}}_{i} & \hat{\bm{G}}_{i11} \hat{\bm{\Phi}}_{i} + \hat{\bm{G}}_{i12}\bm{X}_{i} \\
    \star & \hat{\bm{\Phi}}_{i}^{H}\hat{\bm{\Phi}}_{i}
    \end{bmatrix}(j\omega) > 0.
\end{aligned}
\label{eq:stability-bdd3}
\end{align}

\begin{dfn}[Conservativeness Metric on BDD-Constrained Decentralized Stability]
\label{dfn:conservative-metric}
    Consider a power system with $n$ IBRs as \cref{fig:math_structure}, $\kappa_{-1}^{\star}=\kappa^{\star}-1$ is defined as one metric quantifying the BDD-constrained decentralized stability conservativeness and
    \begin{align*}
    \begin{aligned}
        &\kappa^{\star}=\arg\max_{\kappa}  \\ &\quad\Big(\mbox{Run \cref{alg:1}} \enspace \mathrm{where} \enspace \eqref{eq:stability-bdd2} \enspace \mbox{is replaced with} \enspace \eqref{eq:stability-bdd3} \\ 
        &\qquad\qquad\qquad\qquad\qquad\mathrm{stabilises} \enspace \mathrm{diag}\{\mathcal{G}_{i}\}\#\bm{H}(s)\Big)
    \end{aligned}
    \end{align*}
    which can be obtained through \cref{alg:2}.
    
    \begin{algorithm}[!htb]
        \caption{Quantifying conservativeness metric on the BDD decentralized stability certificate}\label{alg:2}
        \KwData{Data in \cref{alg:1}; $\kappa=1$; conservativeness incremental precision $\varepsilon_{\kappa}$.}
        \KwResult{Conservative metric $\kappa_{-1}^{\star}$.}
        \BlankLine
        \Repeat{System $\mathrm{diag}\{\mathcal{G}_{i}\}\#\bm{H}(s)$ destabilized}{
            Run \cref{alg:1} where \eqref{eq:stability-bdd2} is replaced with \eqref{eq:stability-bdd3} in \eqref{eq:stability-alg:1}\;
            $\kappa_{-1}^{\star}=\kappa-1$\;
            $\kappa=\kappa+\varepsilon_{\kappa}$\;
        }
    \end{algorithm}
    \end{dfn}

In addition, imposing a minimum decay rate strictly ensures the system can meet strict system performance beyond stabilization, e.g., voltage and power tracking damping.
\begin{cor}[Minimum Decay Rate for Controller Synthesis]
\label{cor:decay-rate}
    Based on the systems in \cref{thm:constraints}, both IBRs and the interconnected system are stabilized if $\exists\gamma>0$, $\forall\,\omega\in[0,+\infty)$, $\forall\,\mu\in[1,+\infty)$, $\forall\,i\in\{1,\cdots,n\}$, \eqref{eq:stability-decay} satisfies.
    \begin{subequations}
    \begin{align}
        \begin{bmatrix}
        \frac{1}{n_t}\bm{I} - \hat{\bm{\Lambda}}_{i} & \hat{\bm{G}}_{i11} \hat{\bm{\Phi}}_{i} + \hat{\bm{G}}_{i12}\bm{X}_{i} \\
        \star & \hat{\bm{\Phi}}_{i}^{H}\hat{\bm{\Phi}}_{i}
        \end{bmatrix}(-\alpha+j\omega) > 0
        \label{eq:stability-bdd4}
        \\
        \begin{bmatrix}
        \gamma\bm{I} - \bm{\Lambda}_{i} & \bm{G}_{i11} \bm{\Phi}_{i} + \bm{G}_{i12}\bm{X}_{i} \\
        \star & \bm{\Phi}_{i}^{H}\bm{\Phi}_{i}
        \end{bmatrix}(-\alpha+j\omega)\geq 0
        \label{eq:stability-local4}
    \end{align}
    \label{eq:stability-decay}
    \end{subequations}
    where $\alpha\geq0$ denotes a desired minimum decay rate.
\end{cor}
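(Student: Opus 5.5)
The plan is to reduce \cref{cor:decay-rate} to \cref{thm:constraints} through the frequency shift $s\mapsto s-\alpha$. For each IBR $i$ I would introduce shifted transfer matrices $\tilde{\bm{G}}_{ikl}(s):=\bm{G}_{ikl}(s-\alpha)$, $\tilde{\bm{H}}(s):=\bm{H}(s-\alpha)$ and $\tilde{\bm{K}}_{i}(s):=\bm{K}_{i}(s-\alpha)=\bm{X}_{i}(s-\alpha)\bm{Y}_{i}^{-1}(s-\alpha)$. On the imaginary axis these give exactly the quantities $\bm{G}_{ikl}(-\alpha+j\omega)$ and $\bm{H}(-\alpha+j\omega)$ that appear in \eqref{eq:stability-decay}.

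The augmented blocks $\hat{\bm{G}}_{ikl}$ and the derived matrices $\hat{\bm{\Phi}}_{i},\hat{\bm{\Psi}}_{i},\hat{\bm{\Lambda}}_{i},\bm{\Phi}_{i},\bm{\Psi}_{i},\bm{\Lambda}_{i}$ are built frequency-wise by rational operations. Hence evaluating them for the shifted systems at $j\omega$ is the same as evaluating the original expressions at $-\alpha+j\omega$. Conditions \eqref{eq:stability-bdd4}--\eqref{eq:stability-local4} are therefore exactly \eqref{eq:stability-bdd}--\eqref{eq:stability-local} for the shifted interconnection $\mathrm{diag}\{\tilde{\mathcal{G}}_{i}\}\#\tilde{\bm{H}}$ with controllers $\tilde{\bm{K}}_{i}$.

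The main step is to invoke \cref{thm:constraints}, through \cref{prop:aug-bdd} and \cref{prop:bdd4stability}, for this shifted system. The algebra of \eqref{eq:RF1-1}--\eqref{eq:RF1-3} is pointwise in frequency, so it carries over verbatim. \eqref{eq:stability-bdd4} then gives $\|(\mu\bm{I}_{2}+\bm{T}_{i}\bm{H}_{i,i})^{-1}\bm{T}_{i}\bm{H}_{i,-i}\|_{\infty}<1$ along $\mathrm{Re}\,s=-\alpha$, and \eqref{eq:stability-local4} gives local stabilization of the shifted subsystems.

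Next comes the Nyquist/generalized Nyquist argument underlying \cref{prop:bdd4stability}. Applied on the contour $\mathrm{Re}\,s=-\alpha$ instead of the $j\omega$ axis, it shows that $\det(\bm{I}+\bm{L}(s))$ has no zeros in $\mathrm{Re}\,s\geq-\alpha$. This holds because the homotopy $\bm{I}+\bm{L}/\mu$ never becomes singular on the shifted contour, and \cref{lem:bdd} applied pointwise secures nonsingularity. The closed-loop poles of $\mathrm{diag}\{\mathcal{G}_{i}\}\#\bm{H}$ are those of the shifted loop translated by $-\alpha$. They therefore lie in $\mathrm{Re}\,s<-\alpha\leq 0$, which gives stability with decay rate at least $\alpha$. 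The case $\alpha=0$ recovers \cref{thm:constraints}.

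The main obstacle is the internal-stability hypothesis in the shifted setting. \cref{prop:bdd4stability} requires $\tilde{\bm{H}}$ and $\tilde{\bm{Y}}_{I}$ to be internally stable. \cref{ass:H} only gives poles of $\bm{H}$ in $\mathrm{Re}\,s<0$, not in $\mathrm{Re}\,s<-\alpha$.

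I would first try to avoid requiring stability of $\tilde{\bm{H}}$ by using the argument-principle form of the Nyquist criterion. In that form, the open-loop poles of $\bm{L}$ lying in the strip $-\alpha\leq\mathrm{Re}\,s<0$ must be counted as encirclements. The homotopy in $\mu$ from $\mu=\infty$, where $\bm{I}+\bm{L}/\mu\to\bm{I}$, preserves the encirclement count, since \cref{lem:bdd} keeps the matrix nonsingular on the contour for every $\mu\geq1$. The net count therefore equals that of the open loop, and closed-loop zeros in the half-plane must correspond to open-loop poles there.

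If this count cannot be closed cleanly, I would state the corollary under the implicit requirement that $\bm{H}$ has decay rate at least $\alpha$. \cref{ass:H} would then be read at the shifted operating condition, which is what the frequency response $\bm{H}(-\alpha+j\omega)$ supplied by the operator presupposes.

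Local stability of the shifted $\tilde{\mathcal{G}}_{i}\#\tilde{\bm{K}}_{i}$ follows as in the third step of \cref{thm:constraints}. It uses the Nyquist reasoning of \cite{SCHUCHERT2024111398}, with an initial controller $\bm{K}_{ic}$ that places the local closed-loop poles left of $-\alpha$.
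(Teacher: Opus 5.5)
Your approach is the paper's. Its entire justification is that replacing $j\omega$ by $-\alpha+j\omega$ shifts both the open-loop models and the parametrized controller, so \cref{thm:constraints} is executed in a shifted Laplace domain and the closed-loop poles land left of $-\alpha$. Your substitution $s\mapsto s-\alpha$, followed by \cref{prop:aug-bdd} and \cref{prop:bdd4stability} on the shifted loop, is that argument written out. The obstacle you isolate is genuine, and the paper passes over it. Applying \cref{prop:bdd4stability} in the shifted domain needs $\bm{H}(s-\alpha)$ internally stable, i.e.\ every pole of $\bm{H}$ in $\mathrm{Re}\,s<-\alpha$, which \cref{ass:H} does not supply. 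As you also note, $\bm{K}_{ic}$ must likewise place the local closed-loop poles left of $-\alpha$.

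Your first repair cannot be closed, however. The $\mu$-homotopy gives zero winding of $\det(\bm{I}+\bm{L})$ around the shifted contour, so the argument principle says the closed loop has exactly as many poles in $\mathrm{Re}\,s>-\alpha$ as the open loop. If $\bm{H}$ has poles in the strip $-\alpha<\mathrm{Re}\,s<0$, this refutes the decay-rate conclusion rather than rescuing it. Because the count does not locate those closed-loop poles, it does not even give plain stability.

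A concrete case: take $\bm{H}_{i,-i}=\bm{0}$ (BDD index $0$), $\bm{T}_{i}=\bm{I}_{2}$ and $\bm{H}_{i,i}=\mathrm{diag}\{L,0\}$, with $L(s)=-m/(s+\beta)$, $0<\beta<\alpha$ and $m>\beta$. Then \cref{ass:H} holds. On $s=-\alpha+j\omega$ one has $L=m((\alpha-\beta)+j\omega)/((\alpha-\beta)^{2}+\omega^{2})$, which has positive real part. Hence $\mu\bm{I}_{2}+\bm{T}_{i}\bm{H}_{i,i}$ is nonsingular for all $\mu\geq1$, and the shifted form of \eqref{eq:bdd4thm1} holds trivially. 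Yet $1+L(s)=0$ at $s=m-\beta>0$, so the interconnection is unstable.

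Your fallback is therefore necessary, not a convenience: the corollary requires \cref{ass:H} to be strengthened so that $\bm{H}$ has all its poles in $\mathrm{Re}\,s<-\alpha$. This is an extra assumption on the grid. It is not implied by the availability of $\bm{H}(-\alpha+j\omega)$, which is finite whenever no pole lies on that line. With that hypothesis (and the shifted-stabilizing $\bm{K}_{ic}$), your proof is complete.
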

In \cref{cor:decay-rate}, through replacing $j\omega$ with $-\alpha+j\omega$, we shift both open-loop models and parametrized controllers, which enforces a minimum decay rate $\alpha$ by executing the entire controller synthesis in a shifted Laplace domain and further forces interconnected closed-loop poles to the left of $-\alpha$ in all possible oscillatory frequencies. Hence, \cref{cor:decay-rate} has a similar proof to \cref{thm:constraints}.
Similar to \cref{thm:constraints}, \eqref{eq:stability-decay} can be converted into LMIs as \eqref{eq:stability2}. Then \cref{cor:decay-rate} can be solved iteratively and accordingly as \cref{alg:1}, and a conservativeness analysis specialized for \cref{cor:decay-rate} can be achieved similarly as \cref{alg:2}.

\begin{rmk}[Practical Feasibility of \cref{cor:decay-rate}]
    Since frequency scan is available to IBR vendors, imposing minimum decay rate as \cref{cor:decay-rate} may face a practical obstacle. If the frequency scan is known via parametric transfer functions, the complex contour shift is trivial. However, if frequency responses are purely data-driven, such as bode plot data points, measured strictly along the $j\omega$ axis, it is infeasible to multiply a $j\omega$ data point by a scalar to accurately find the $-\alpha+j\omega$ data point without knowing the internal dynamics. To solve this problem, a modified physical testing (i.e., inject exponential signals ($e^{-\alpha t}\sin(\omega t)$) to directly measure the frequency response along the $-\alpha+j\omega$ contour) or a vector fitting to identify a parametric transfer function can be applied.
\end{rmk}

\section{Results}
\label{sec:result}
In this section, a modified IEEE 9-bus system \cite{stability_book} with 3 IBRs is considered to validate the proposed BDD-constrained decentralized IBR MIMO control design. \cref{fig:topology} shows the network topology. System is modeled in EMT-$dq$ domain, as described in \cite{EMTdq_Javaid}, and all relevant parameters are given in \cite{ibr_model_bdd}.

\begin{figure}[!htb]
    \centering
    \includegraphics[width=\linewidth]{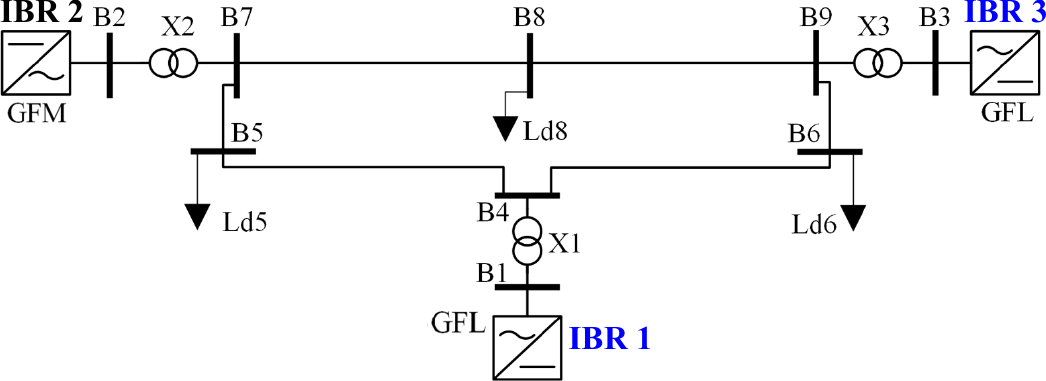}
    \caption{Modified IEEE 9-bus system: IBR 1 and IBR 3 are designed with the proposed BDD decentralized stability constraints.}
    \label{fig:topology}
\end{figure} 

\subsection{Control Performance via Decentralized Synthesis}


The proposed decentralized control synthesis is utilized for IBR 1 and IBR 3. \cref{fig:sim1}a shows a range of system eigenvalues and \cref{fig:sim1}b shows local BDD index tests. The strict placement of eigenvalues in the left half-plane corresponds with both IBRs successfully passing the BDD test, which is further illustrated in power and voltage tracking performance in \cref{fig:sim1}c and \cref{fig:sim1}d. The BDD index in \cref{fig:sim1}b, evaluated via the infinity norm $\|\hat{\bm{T}}_i\|_\infty$, serves as a metric for decentralized stability in interconnected systems. It is shown that for a frequency sweep between $1-1000\,\mathrm{Hz}$, the infinity norm stays strictly below the critical threshold of $1$. The time-domain responses in \cref{fig:sim1}c and \cref{fig:sim1}d further confirmed consistent performance under simultaneous $0.01\,\mathrm{pu}$ step changes in the active-power and voltage references at $t=0.1\,\mathrm{seconds}$.

\begin{figure}[!htb]
    \centering
    \includegraphics[width=\linewidth]{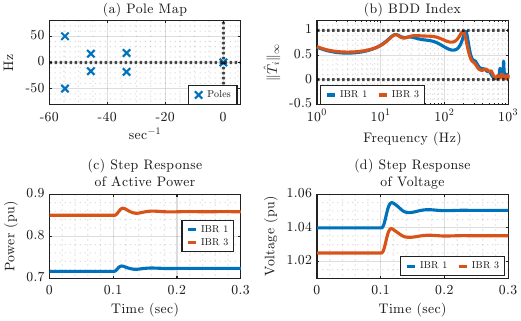}
    \caption{Validations of the proposed BDD-constrained decentralized IBR controller synthesis.}
    \label{fig:sim1}
\end{figure}

\subsection{Minimum Decay Rate Comparison}

Integrating different minimum decay rates into the proposed controller synthesis yields frequency-domain comparisons shown in \cref{fig:sim2}, demonstrating a trade-off between small-signal stability margins, robustness, and closed-loop bandwidth.
As shown in \cref{fig:sim2}a, changing $\alpha$ shifts the pole locations, but the dominant poles remain on the stable side of the complex plane for all tested values. The BDD index curves in \cref{fig:sim2}b and \cref{fig:sim2}c remain below the unit threshold for both IBRs, which suggests that the interconnection effects remain sufficiently dominated by the local subsystem dynamics under the considered decay settings. This confirms that decentralized synthesis ensures the interconnected-system stability.

\begin{figure}[!htb]
    \centering
    \includegraphics[width=\linewidth]{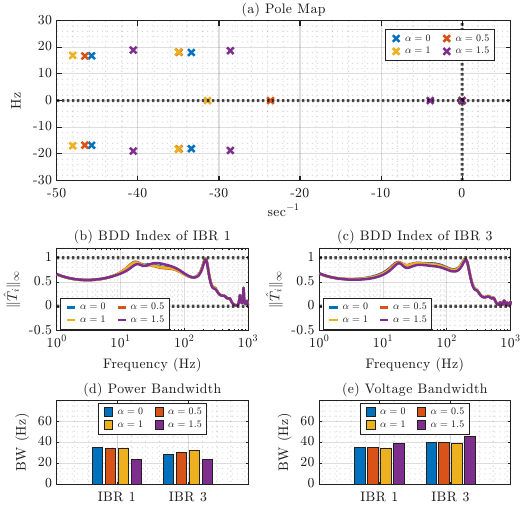}
    \caption{Imposing minimum decay rate in the proposed BDD-constrained decentralized IBR controller synthesis.}
    \label{fig:sim2}
\end{figure}

The bandwidth analyses in \cref{fig:sim2}d and \cref{fig:sim2}e show that $\alpha$ changes the active-power and voltage bandwidths in different ways. This implies that the decay parameter does not simply increase or decrease the overall dynamic speed, but redistributes the dynamic characteristics between power and voltage channels. Overall, the case of imposing minimum decay rate suggests that $\alpha$ can be used to tune transient performance while maintaining stability.

\subsection{Conservativeness Analysis}


The conservatism in the BDD-based decentralized stability certificate has been investigated by systematically relaxing the BDD constraint through the scaling constant $\kappa$ in \cref{fig:sim3}. This analysis reveals the gap between the BDD-based theoretical stability boundary and the system stability limit of the interconnected multi-IBR system. The dominant poles in \cref{fig:sim3}a move closer to the imaginary axis as $\kappa$ increases, showing a reduction in the damping. This trend is also reflected in \cref{fig:sim3}b and \cref{fig:sim3}c, where the BDD index peaks increase and exceed the unit threshold for larger $\kappa$ values. The system remains stable for $\kappa\in\{1.41,1.73,1.76\}$, demonstrating that the strict $\kappa=1$ requirement is mathematically conservative. It is not until $\kappa=1.79$ that the dominant poles cross into the right-half of the complex plane, indicating instability. 

\begin{figure}[!htb]
    \centering
    \includegraphics[width=\linewidth]{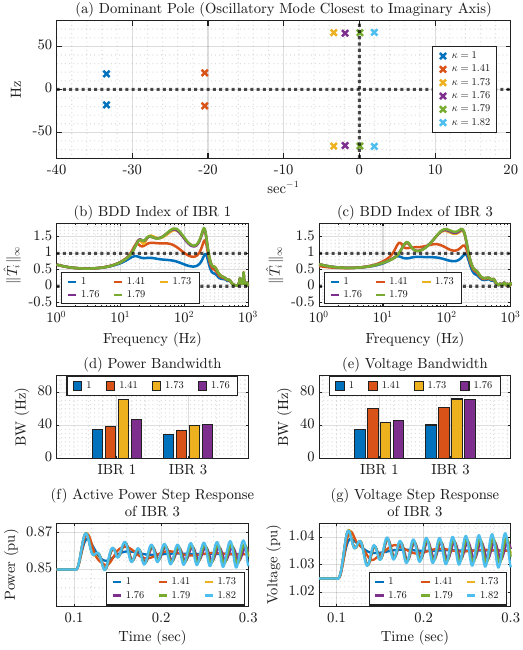}
    \vspace{-1em}
    \caption{Conservative analysis of the proposed BDD-constrained decentralized IBR controller synthesis: legends in subfigures are based on $\kappa$ values. In the subfigures (b) and (c), the case of $\kappa=1.82$ is omitted because it almost overlaps with the case of $\kappa=1.79$, while the subfigures (d) and (e) only show the stable cases.} 
    \label{fig:sim3}
\end{figure}

The bandwidth in \cref{fig:sim3}d and \cref{fig:sim3}e further show that larger $\kappa$ modifies both active-power and voltage bandwidths. The time-domain responses in \cref{fig:sim3}f and \cref{fig:sim3}g confirm the same behavior: larger $\kappa$ produces more persistent oscillations after the disturbance. These results indicate that increasing $\kappa$ relaxes the bound but reduces the damping, so $\kappa$ must be selected conservatively to avoid weakly damped responses.

\begin{figure}[!htb]
    \centering
    \includegraphics[width=\linewidth]{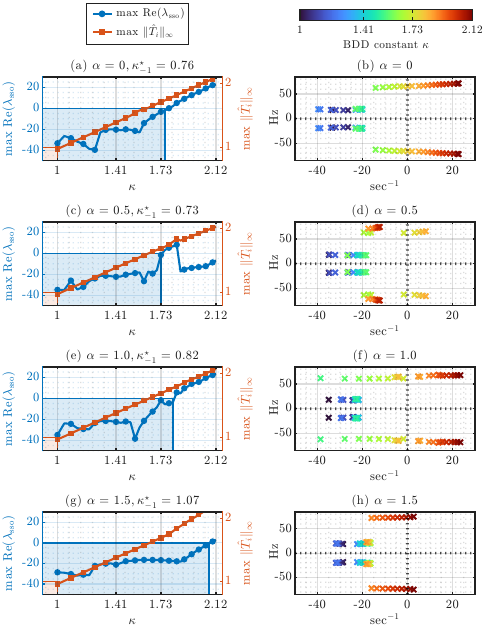}
    \vspace{-1em}
    \caption{Conservativeness analysis in terms of imposing different minimum decay rates. Left axis in blue denotes the distance of real part of least-damped sub-synchronous mode $\max\,\mathrm{Re}(\lambda_\mathrm{sso})$, while right axis in red denotes the BDD index. With increasing $\alpha$ (top-down) the conservativeness metric in blue area becomes larger and the $\kappa$ satisfying BDD-based decentralized stability in red remains unchanged. In the right column, the dominant pole changes with $\kappa$ corresponding to blue line in the left.}
    \label{fig:sim4}
\end{figure}

\cref{fig:sim4} illustrates a general, though non-monotonic, relationship between the imposed minimum decay rates $\alpha$ and the resulting conservativeness of the stability certificate, denoted by $\kappa^\star_{-1}$. The sub-figures systematically detail this relationship by combining theoretical mathematical bounds, i.e., BDD indices, against system stability.
In the left column, the red curves with corresponding right axes denote BDD indices, while the blue curves with left axes track the real part of the least-damped sub-synchronous mode, $\max \text{Re}(\lambda_{\text{sso}})$. The instability strictly occurs when this blue trajectory crosses the zero axis.
The conservativeness is captured by the blue shaded regions, which define the gap between the loss of the theoretical BDD guarantee and the onset of system instability. As $\alpha$ increases from top to bottom, the red shaded area (indicating the region satisfying strictly BDD-based decentralized stability) remains unchanged. The conservativeness metric, represented by the blue shaded area, exhibits an overall expanding trend despite a slight initial contraction. Specifically, as $\alpha$ increases, the allowable relaxation $\kappa^\star_{-1}$ before reaching instability dips slightly, before growing substantially to $0.82$ at $\alpha=1.0$ and $1.07$ at $\alpha=1.5$.
The right-column dominant poles support this interpretation by showing that higher $\alpha$  values shift the pole clusters toward the imaginary axis and, in some cases, toward the right half-plane. Interestingly, the dominant poles for all cases are initialized at a relatively similar depth within the left-half of the complex plane (near $-20\,\text{sec}^{-1}$ for $\kappa \approx 1$, indicated by dark blue markers). 
However, as $\alpha$ increases, particularly to $\alpha=1.5$, the pole trajectories become less sensitive to $\kappa$, requiring a substantially larger $\kappa$ to induce instability. This behavior suggests that the stronger decay-rate constraint reshapes the synthesized controller and reduces the sensitivity of the closed-loop poles to $\kappa$, rather than uniformly shifting them farther into the left-half plane for each $\kappa$.

Overall, this analysis underscores a fundamental control design trade-off: while enforcing stricter decay rates (i.e., larger $\alpha$) improves the predefined small-signal damping, it generally makes the BDD stability bounds increasingly conservative. 
In addition, the admissible BDD constant $\kappa$ is determined by the prescribed decay rate $\alpha$ and therefore represents an $\alpha$-dependent measure of conservatism.

\section{Conclusion}
\label{sec:conclusion}

This paper has shown the application of BDD theory for decentralized IBR control design. The proposed framework ensures the small-signal stability of general multi-IBR power systems. Translating the BDD condition into LMIs for controller synthesis ensures the incoming IBRs comply with grid stability requirements. Furthermore, the implementation of a minimum decay rate ensures a well-damped response from a small-signal stability perspective. It has been demonstrated that while such a decentralized stability certificate introduces conservatism, it simultaneously provides a robust stability margin that can be systematically increased by imposing stricter minimum decay rates. A key merit of the proposed IBR controller synthesis lies in its practical feasibility as it successfully bridges the information gap between system operators and IBR vendors by utilizing system-level frequency scans, which are often the only accessible grid models. 

It should be noted that the decentralized IBR control framework developed herein establishes a sufficient but not necessary condition under BDD theory. In addition, its single-point linearization lacks explicit robustness guarantees across multiple operating points. Hence, developing a tractable synthesis method that provides stability certificates while ensuring robust performance against diverse grid conditions remains an open challenge that we are addressing in our follow-up work.

\ifCLASSOPTIONcaptionsoff
  \newpage
\fi



%
\bibliographystyle{IEEEtran}
\bibliography{ref}
%








\end{document}